\newtheorem{theorem}{Theorem}[section]
\newtheorem{lemma}[theorem]{Lemma}
\newtheorem{proposition}[theorem]{Proposition}
\newenvironment{proof}{\paragraph{Proof.}}{\hfill $\square$\\}
\newenvironment{proof*}{\paragraph{Proof.}}{}
\newcommand{\hk}{\hslash}
\newcommand{\alg}{\mathfrak{g}}
\newcommand{\Z}{\mathbb{Z}}
\newcommand{\N}{\mathbb{N}}
\newcommand{\Rm}{\mathbb{R}}
\newcommand{\Km}{\mathbb{K}}
\newcommand{\C}{\mathcal{C}}
\newcommand{\F}{\mathcal{F}}
\newcommand{\pr}{\partial}
\newcommand{\me}{\geqslant}
\newcommand{\bra}[1]{\left (#1\right )}
\newcommand{\brac}[1]{\left [#1\right ]}
\newcommand{\pobr}[1]{\left \{#1\right \}}
\newcommand{\pd}[2]{\frac{\partial #1}{\partial #2}}
\newcommand{\var}[2]{\frac{\delta #1}{\delta #2}}
\newcommand{\res}{{\rm res}}
\newcommand{\tr}{{\rm Tr}}
\newcommand{\e}{\mathcal{E}}
\newcommand{\pmatrx}[1]{\begin{pmatrix} #1 \end{pmatrix}}
\newcommand{\Time}{\mathbb{T}}
\newcommand{\T}{{\rm T}}
\newcommand{\ad}{{\rm ad}}
\newcommand{\arrow}{\rightarrow}
\begin{document}

\title{\bf Bi-Hamiltonian structures for integrable systems on regular time scales}

\author{B\l a\.zej M. Szablikowski$^{1,2,}$\footnote{E-mail: {\tt bszablik@maths.gla.ac.uk}}, 
Maciej B\l aszak$^{2,}$\footnote{E-mail: {\tt blaszakm@amu.edu.pl}} 
and Burcu Silindir$^{3,}$\footnote{E-mail: {\tt {\tt silindir@fen.bilkent.edu.tr}}}\\[3mm]
\small $^1$ Department of Mathematics, University of Glasgow\\
\small Glasgow G12 8QW, U.K.\\[2mm]
\small $^2$ Department of Physics, Adam Mickiewicz University\\
\small Umultowska 85, 61-614 Pozna\'n, Poland\\[2mm]
\small $^3$ Department of Mathematics, Faculty of Sciences\\
\small Bilkent University, 06800 Ankara, Turkey}

\maketitle

\begin{abstract}
A construction of the bi-Hamiltonian structures for integrable
systems on regular time scales is presented. The trace functional
on an algebra of $\delta$-pseudo-differential operators, valid on
an arbitrary regular time scale, is introduced. The linear Poisson
tensors and the related Hamiltonians are derived. The quadratic
Poisson tensors is given by the use of the recursion operators of
the Lax hierarchies. The theory is illustrated by
$\Delta$-differential counterparts of Ablowitz-Kaup-Newell-Segur
and Kaup-Broer hierarchies.
\end{abstract}

\section{Introduction}

The concept of integrable systems on regular time scales can build
bridges between field systems and lattice systems. This concept
provides us not only a unified approach to study on discrete
intervals with uniform step size (i.e. lattice $\hk\Z$) and
continuous intervals but also an extended approach to study on
discrete intervals with non-uniform step size (for instance
$q$-discrete numbers $\Km_q$) or combination of continuous and discrete
intervals.

The approach of time scales allows the unification of such classes
of nonlinear evolution equations like field soliton systems
\cite{Gelfand,Adler,Konopelchenko,OS,Blaszak}, lattice soliton
systems \cite{Kuper,Suris,bm,Oevel4}, $q$-discrete soliton systems
\cite{k,f,klr,ahm} and others. The above approach was initiated in
\cite{G-G-S} where the Gelfand-Dickey construction was extended.
The theory was further developed in \cite{BSS}, where systematic
construction of $(1+1)$-dimensional integrable systems on regular
time scales was presented, and a very effective tool, classical
$R$-matrix formalism, was utilized. The $R$-matrix formalism
provides a construction of infinite hierarchies of mutually
commuting vector fields. In \cite{BSS}, we examined the general
classes of admissible Lax operators, presented examples of
integrable systems on time scales that can be written in an
explicit form. We also explained the source of constraints first
observed in \cite{G-G-S}.

The greatest advantage of the classical $R$-matrix formalism is
that it allows the construction of the bi-Hamiltonian structures
and conserved quantities. The goal of this work is to present
bi-Hamiltonian structures for $\Delta$-differential integrable
systems on regular time scales. Thus the main result of this
article is the formulation of an appropriate trace form on the
algebra of $\delta$-pseudo-differential operators, that is valid
on an arbitrary regular time scale and in particular the
real case recovers the trace form of pseudo-differential operators
\cite{Adler}. If the appropriate constraints are taken into consideration then
the trace form recovers also the one of shift operators \cite{Suris}.

In Section \ref{a}, we give a brief review of the concept of time
scales, including $\Delta$-derivative and $\Delta$-integrals. In
Section \ref{b}, we fix the class of the $\Delta$-differential
evolution equations under consideration. Besides we define
appropriate functionals and their variational derivatives. In
Sections \ref{c} and \ref{d}, we describe the algebra of
$\delta$-pseudo-differential operators, the Lax hierarchies and
the constraints that appear naturally between the dynamical fields
of admissible finite-field Lax operators. In Section \ref{e}, in
order to find the bi-Hamiltonian structures, we introduce a trace
functional on the algebra of $\delta$-pseudo-differential
operators in terms of which we construct the linear Poisson
tensors and the related Hamiltonians. The quadratic Poisson
tensors are reconstructed in the frame of the recursion operators
\cite{BSS} of the Lax hierarchies. Finally, in Section \ref{f},
the theory is illustrated by bi-Hamiltonian formulation of
finite-field integrable hierarchies on regular time scales which
are $\Delta$-differential counterparts of
Ablowitz-Kaup-Newell-Segur (AKNS) and Kaup-Broer hierarchies.

\section{Calculus on time scales}\label{a}

A time scale $\Time$ is an arbitrary nonempty closed subset of
real numbers $\Rm$ \cite{ah,hil,boh1,boh2}. For the definition of
the derivative on time scales, we use \textit{forward} and
\textit{backward jump operators} $\sigma,\rho:\Time\to\Time$
defined by
\begin{equation*}
\sigma(x)=\inf\, \{ y \in {\mathbb T}: y> x\}\qquad \rho(x)=\sup\, \{ y \in {\mathbb T}: y < x\}.
\end{equation*}
We set in addition $\sigma(\max\Time) = \max \Time$ if there
exists a finite $\max\Time$, and  $\rho(\min\Time) = \min \Time$
if there exists a finite $\min\Time$. The jump operators $\sigma$
and $\rho$ allow the classification of points on a time scale in
the following way:  $x$ is called right dense, right scattered,
left dense, left scattered, dense and isolated if $\sigma(x)=x$,
$\sigma(x)>x$, $\rho(x)=x$, $\rho(x)<x$, $\sigma(x)=\rho(x)=x$ and
$\rho(x)<x<\sigma(x)$, respectively. Moreover, we define  the
graininess function $\mu:\Time\to\Time$ as follows
\begin{equation*}
\mu(x)=\sigma (x)-x.
\end{equation*}
Besides, $\Time^\kappa$ denotes a set consisting of $\Time$ except for a possible left-scattered
maximal point. Set $x_*=\min\Time$ if there exists a finite $\min\Time$, and set
$x_* = -\infty$ otherwise. Also set $x^*=\max\Time$ if there exists a finite $\max\Time$, and set $x^* = \infty$ otherwise.

Let $f:\Time\to\Rm$ be a function on a time scale $\Time$. Delta derivative of $f$ at $x\in\Time^\kappa$, denoted by $\Delta f(x)$, is defined as
\begin{equation*}
\Delta f(x) = \lim_{s\to x} \frac{f(\sigma (x))-f(s)}{\sigma
(x)-s},\qquad s\in\Time,
\end{equation*}
provided that the limit exists. A function on a time scale is said
to be $\Delta$-smooth if it is infinitely $\Delta$-differentiable
at all points from $\Time^\kappa$.

If functions $f,g:\Time\to\Rm$ are $\Delta$-differentiable, then their product is also $\Delta$-differentiable and the following Lebniz-like rule holds
\begin{equation}\label{leib}
   \begin{split}
   \Delta(f g)(x) &= g(x)\Delta f(x) + f(\sigma(x))\Delta g(x)\\
                  &= f(x)\Delta g(x) + g(\sigma(x))\Delta f(x)
   \end{split}
   \qquad x\in \Time^\kappa.
\end{equation}
Besides, if $f$ is $\Delta$-differentiable function, then
\begin{equation}\label{rel}
    f(\sigma(x)) = f(x) + \mu(x)\Delta f(x).
\end{equation}
If $x\in\Time$ is right-dense, then $\mu(x)=0$ and the relation
\eqref{rel} is trivial.

The shift operator $E$ is defined by the formula
\begin{equation*}
Ef(x) = f(\sigma(x))\qquad x\in\Time .
\end{equation*}
Moreover the relation \eqref{rel} implies that
\begin{equation}\label{rel2}
    E = 1 +\mu\Delta.
\end{equation}

In particular, if point $x\in\Time$ lies within some continuous
interval, being part of a time scale, or if the time scale
$\Time=\mathbb{R}$, then $\Delta$-derivative is ordinary
derivative with respect to $x$, i.e. $\Delta = \pr_x$. If
$x\in\Time$ is such that $\mu(x)\neq 0$, then $\Delta =
\frac{1}{\mu}(E-1)$. This is  the case when $x$ is an isolated
point, for instance  $\Time = \Z$ or $\Km_q$.

Every continuous function $f:\Time\to\Rm$ possesses
$\Delta$-antiderivative $F:\Time\to\Rm$ such that $\Delta F(x) =
f(x)$ holds for all $x\in\Time^\kappa$. Thus we  define
$\Delta$-integral from $a$ to $b$ of $f$ by
\begin{equation}\label{int}
    \int_a^b f(x)\ \Delta x = F(b) - F(a)\qquad a,b\in\Time.
\end{equation}
Notice that, for every continuous function $f$ we have
\begin{equation}\label{sig}
    \int_x^{\sigma(x)} f(x)\ \Delta x = \mu(x)f(x).
\end{equation}
Hence, it is clear that the $\Delta$-integral is determined by local properties
of a time scale.

In particular, when the points $a$ and $b$ lie within continuous
interval, being part of a time scale, then \eqref{int} is an
ordinary Riemann integral. If all the points between $a$ and $b$
are isolated, then $b=\sigma^n(a)$ for some $n\in\Z_+$ and
$\Delta$-integral is a sum (this follows immediately from
\eqref{sig}), i.e.
\begin{equation*}
    \int_a^b f(x)\ \Delta x = \sum_{i=1}^{n-1}\mu(\sigma^i(a))f(\sigma^i(a)).
\end{equation*}
For more complicated time scales which are combinations of
continuous intervals, isolated points, etc., the integrals can be
constructed by appropriate gluing of Riemann integrals and sums.

The integration by parts formula follows from the Leibniz-like
rule \eqref{leib} as
\begin{equation}\label{part}
    \int_a^b f(x)\Delta g(x)\ \Delta x = f(x)g(x)|_a^b -
    \int_a^b g(\sigma(x))\Delta f(x)\ \Delta x,
\end{equation}
where $f$ and $g$ are continuous functions. The generalization of
\eqref{int} to the improper integral is clear. Thus, we define
$\Delta$-integral over an whole time scale $\Time$ by
\begin{equation*}
    \int_\Time f(x)\ \Delta x := \int_{x_*}^{x^*} f(x)\ \Delta x
    = \lim_{x\to x^*}F(x) - \lim_{x\to x_*}F(x)
\end{equation*}
provided that this integral converges, i.e. the limits exist.

For our purposes, we demand  the time scales where the forward
jump operator $\sigma:\Time\to\Time$ is invertible. A time scale
$\Time$ is called regular if $\sigma(\rho(x))=x$ and
$\rho(\sigma(x))=x$ for all $x\in\Time$. The first condition
implies that $\sigma$ is 'onto' and the second condition implies
that $\sigma$ is 'one-to-one'. Thus on a regular time scale
$\sigma^{-1}(x) = \rho(x)$. Actually, a time scale is regular if
and only if each point of $\Time\setminus \{x_*,x^*\}$ is either
two-sided dense or two-sided scattered and the point
$x_*=\min\Time$ is right dense and the point $x^*=\max\Time$ is
left-dense \cite{G-G-S}. For instance $\Time = [-1,0] \cup
\{1/k:k\in\mathbb{N}\} \cup \{k/(k+1):k\in\mathbb{N}\}\cup[1,2]$
is a regular time scale.

Let us consider some particular examples of regular time scales:
\paragraph{The real case, $\Time=\Rm$.} We have  $\sigma(x) = x$ and $\mu(x) =0$
for all $x\in\Rm$. In this case $\Delta$-derivative and $\Delta$-integral
are such that
\begin{equation*}
    \Delta f(x) = \pr_xf(x)\qquad\text{and}\qquad\int_{\Rm} f(x)\ \Delta x
= \int_{-\infty}^{+\infty} f(x)\ dx.
\end{equation*}
\paragraph{The lattice case, $\Time = \hk\Z$.} Let
 $\hk$ be a positive parameter. In this case $\sigma(x) = x +
\hk$ and $\mu(x) = \hk$, where $x\in\hk\Z$. $\Delta$-derivative
and $\Delta$-integral have the form
 \begin{equation*}
    \Delta f(x) = \frac{1}{\hk}\bra{f(x+\hk)-f(x)}\qquad\text{and}\qquad\int_{\hk\Z} f(x)\ \Delta x
= \hk\sum_{n\in\Z} f(n\hk).
\end{equation*}
\paragraph{The $q$-discrete numbers, $\Time = \Km_q := q^{\Z}\cup \{0\}$ ($q>1$).}
For $x\in\Km_q$, one finds that $\sigma(x) = qx$ and $\mu(x) =
(q-1)x$. Then
\begin{equation*}
    \Delta f(x) = \frac{f(qx)-f(x)}{(q-1)x}
\end{equation*}
where $x\neq 0$, and
\begin{equation*}
    \int_{\Km_q} f(x)\ \Delta x = \sum_{n\in\Z}q^{n}(q-1) f\bra{q^n}.
\end{equation*}

\section{$\Delta$-differential systems}\label{b}

Consider $N$-tuple $u:=(u_1,\ldots,u_N)^\T$ of dynamical fields $u_k:\Time\to\Rm$ being $\Delta$-smooth functions on a regular time scale $\Time$. Let
\begin{equation*}
    \C = \pobr{\Lambda u_k : k=1,\ldots,N; \Lambda\in S},
\end{equation*}
where
\begin{equation*}
    S = \pobr{\Delta^{i_1}{\Delta^\dag}^{j_1}\cdot\ldots\cdot\Delta^{i_n}{\Delta^\dag}^{j_n}: n\in\N_0, i_1, j_1\ldots,i_n, j_n\in\N},
\end{equation*}
and $\Delta^\dag$ is defined by \eqref{dag}. Therefore $S$ is the
set of all possible strings of $\Delta$ and $\Delta^\dag$
operators. Note that $\Delta$ and $\Delta^\dag$ do not commute.

What we mean by a $\Delta$-differential system, is  a system of
evolution equations
\begin{equation}\label{vec}
    u_t = K [u],
\end{equation}
where $t\in \mathbb{R}$ is an evolution parameter (time),
$u_t:=\pd{u}{t}$ and $K:= (K_1, K_2, \ldots)^\T$ with $K_i$ being
finite order polynomials of elements from $\C$, with coefficients that
might be time independent ($\Delta$-smooth) functions.

Additionally, we assume that all fields $u$ with their
$\Delta$-derivatives are rapidly decaying functions as $x$ goes to
$x_*$ or $x^*$. Then, the functionals have the following form
\begin{equation}\label{fun}
    F(u) = \int_\Time f[u]\ \Delta x,
\end{equation}
where $f[u]$ are polynomial functions of $\C$. Clearly two
densities give equivalent functionals \eqref{fun} if they differ
modulo exact $\Delta$-derivatives. Having defined the class of
evolution systems \eqref{vec} and functionals \eqref{fun}, we
further proceed in a standard way, that is  we define the duality
map, Poisson tensors, etc.

The integration by parts formula \eqref{part} leads us to the
relation
\begin{equation}\label{ap}
    \int_\Time \Delta (f) g\ \Delta x  = -\int_\Time f \Delta E^{-1}(g)\ \Delta x
    =: \int_\Time f \Delta^\dag (g)\ \Delta x.
\end{equation}
Thus the adjoint of $\Delta$-derivative is given by
\begin{equation}\label{dag}
    \Delta^\dag = -\Delta E^{-1}.
\end{equation}
Note that
\begin{equation*}
    E^{-1} = 1 + \mu\Delta^\dag.
\end{equation*}
Besides, by the use of  \eqref{rel2}, one finds that
\begin{equation}\label{adj2}
    (E\mu)^\dag = \mu(1+\mu\Delta)^\dag = \mu - \mu\Delta E^{-1}\mu
    = \mu - (E-1)E^{-1}\mu = E^{-1}\mu.
\end{equation}

Consequently, the variational derivative of a functional in the form \eqref{fun} is defined by
\begin{equation}\label{var}
    \var{F}{u_k} = \sum_{\Lambda\in S} \Lambda^\dag \pd{f[u]}{(\Lambda u_k)}\qquad k=1,\ldots,N.
\end{equation}
Notice that $\var{}{u}\Delta =0$, therefore the definition of
variational derivative \eqref{var} is consistent with the
definition of functionals \eqref{fun}.

\section{$\delta$-pseudo-differential operators}\label{c}

We introduce $\delta$ operator acting on $\Delta$-smooth functions $u:\Time\to\Rm$
by
\begin{equation}\label{delta}
\delta u:=\Delta u + Eu\delta,
\end{equation}
which is consistent with the Leibniz-like rule \eqref{leib}. By
the use of \eqref{delta}, we have
\begin{align*}
    \delta^{-1}u &= E^{-1}u\delta^{-1} + \delta^{-1}\Delta^\dag u\delta^{-1}\\
                 &= E^{-1}u\delta^{-1} + E^{-1}\Delta^\dag u\delta^{-2}
                 + E^{-1}{\Delta^\dag}^2 u\delta^{-3} + \ldots\ .
\end{align*}
The generalized Leibniz rule for the $\delta$-pseudo-differential operators takes the form
\begin{equation}\label{gl}
    \delta^n f = \sum_{k=0}^\infty S_k^{n}f\delta^{n-k}\qquad n\in\Z,
\end{equation}
where
\begin{equation*}
    S_k^n = \Delta^k E^{n-k} + \ldots + E^{n-k}\Delta^k\qquad\text{for}\qquad n\me k\me 0,
\end{equation*}
is a sum of all possible strings of length $n$, containing exactly $k$ times $\Delta$ and $n-k$ times $E$;
\begin{equation*}
    S_k^n = E^{-1}\bra{{\Delta^\dag}^k E^{n+1} + \ldots + E^{n+1}{\Delta^\dag}^k}
    \qquad\text{for}\qquad n<0\quad\text{and}\quad k\me 0
\end{equation*}
consists of the factor $E^{-1}$ times the sum of all possible strings of length $k-n-1$, containing exactly $k$ times $\Delta^\dag$ and $-n-1$ times $E^{-1}$; in all remaining cases $S_k^n = 0$. Besides, we have
the recurrence relations
\begin{equation}\label{rec1}
    S_k^{n+1} = S_k^n E + S_{k-1}^n \Delta\qquad\text{for}\qquad n\me 0
\end{equation}
and
\begin{equation}\label{rec2}
    S_k^{n-1} = \sum_{i=0}^k S_{k-i}^n E^{-1}{\Delta^\dag}^i\qquad\text{for}\qquad n<0 .
\end{equation}

\begin{lemma}\label{lemma}
For all $n\in\Z$, the relation
\begin{equation}\label{rel1}
    \sum_{k\me 0} (-\mu)^k S_k^n = (E-\mu\Delta)^n = 1
\end{equation}
holds.
\end{lemma}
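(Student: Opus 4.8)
The plan is to read the claimed chain of equalities as two separate statements. The right-hand equality $(E-\mu\Delta)^n=1$ is immediate: by \eqref{rel2} one has $E-\mu\Delta=1$ as an operator, so every power equals the identity. Hence the real content is the left-hand equality $\sum_{k\me 0}(-\mu)^kS_k^n=1$, which I would establish by induction on $n$, starting from $n=0$ and proceeding in both directions using the recurrences \eqref{rec1} and \eqref{rec2}. Throughout, write $\Sigma^n:=\sum_{k\me 0}(-\mu)^kS_k^n$.

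For the base case $n=0$ the Leibniz rule \eqref{gl} gives $\delta^0=1$, so $S_0^0=1$ and $S_k^0=0$ for $k>0$; thus $\Sigma^0=1$. To go upward I would use \eqref{rec1}: substituting $S_k^{n+1}=S_k^nE+S_{k-1}^n\Delta$, the first group of terms reassembles into $\Sigma^nE$, while in the second group the index shift $k\mapsto k+1$ (legitimate since $S_{-1}^n=0$) carries one factor $(-\mu)$ to the far left and yields $(-\mu)\Sigma^n\Delta$. Invoking the inductive hypothesis $\Sigma^n=1$ then collapses the result to $E-\mu\Delta=1$ by \eqref{rel2}, giving $\Sigma^{n+1}=1$.

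The downward direction is analogous but uses \eqref{rec2}. Substituting $S_k^{n-1}=\sum_{i=0}^kS_{k-i}^nE^{-1}{\Delta^\dag}^i$ and re-indexing by $j=k-i$ turns the double sum into $\sum_{i\me 0}(-\mu)^i\bra{\sum_{j\me 0}(-\mu)^jS_j^n}E^{-1}{\Delta^\dag}^i$; this factorization is valid because each power of $\mu$ is a multiplication operator that may be carried to the extreme left, and the trailing factor $E^{-1}{\Delta^\dag}^i$ is independent of $j$. Using $\Sigma^n=1$ reduces the step to showing $\sum_{i\me 0}(-\mu)^iE^{-1}{\Delta^\dag}^i=1$, which is exactly $\Sigma^{-1}$, since the expansion of $\delta^{-1}$ preceding \eqref{gl} gives $S_i^{-1}=E^{-1}{\Delta^\dag}^i$; this therefore also serves as the base case for the negative side.

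Finally, I would prove $\Sigma^{-1}=1$ directly by telescoping. Replacing $E^{-1}=1+\mu\Delta^\dag$ splits $\sum_{i\me 0}(-\mu)^iE^{-1}{\Delta^\dag}^i$ into $\sum_{i\me 0}(-\mu)^i{\Delta^\dag}^i$ plus $\sum_{i\me 0}(-\mu)^i\mu\,{\Delta^\dag}^{i+1}$; since $(-\mu)^i\mu=-(-\mu)^{i+1}$ as functions, the second sum is the negative of the first with its index shifted, so everything cancels except the $i=0$ term, leaving $1$. The main obstacle throughout is the bookkeeping of the non-commutativity of $\mu$, $E$, $\Delta$ and $\Delta^\dag$: the index shifts must keep every power of $\mu$ on the extreme left and every differential factor on the right, and it is precisely the inductive hypothesis $\Sigma^n=1$ that makes the intermediate non-commuting factors disappear and allows the recurrences to close.
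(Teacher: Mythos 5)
Your proof is correct and follows essentially the same route as the paper's: a two-directional induction driven by the recurrences \eqref{rec1} and \eqref{rec2}, anchored at the base cases $n=0$ and $n=-1$, with the inductive hypothesis $\sum_k(-\mu)^kS_k^n=1$ used to collapse each step to $E-\mu\Delta=1$. The only (cosmetic) difference is that you verify $\sum_{k}(-\mu)^kE^{-1}{\Delta^\dag}^k=1$ by direct telescoping after substituting $E^{-1}=1+\mu\Delta^\dag$, whereas the paper generates the same series by recursively substituting $1=E^{-1}-\mu\Delta^\dag$ into $(E-\mu\Delta)^{-1}$; the two are equivalent formal manipulations.
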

The proof is postponed to Appendix.

When $x\in\Time$ is a dense point, i.e. $\mu(x)=0$, then the rule
\eqref{gl} is of the form
\begin{equation}\label{gl1}
    \delta^n f = \sum_{k=0}^\infty \binom{n}{k}\Delta^kf\delta^{n-k}\qquad n\in\Z,
\end{equation}
where $\binom{n}{k}$ is a binomial coefficient such that
$\binom{n}{k} = \frac{n(n-1)\cdot\ldots\cdot(n-k+1)}{k!}$, and
particularly when $x$ is inside of some interval then $\Delta =
\pr_x$. Thus, in this case we recover the generalized Leibniz
formula for pseudo-differential operators.

For $x\in\Time$ such that $\mu(x)\neq0$ it is more convenient to
deal with the operator $\xi := \mu\delta$ instead of $\delta$. By
the use of \eqref{delta}, the generating rule yields
\begin{equation*}
    \xi u = (E-1)u + Eu\xi,
\end{equation*}
and hence it follows that
\begin{equation}\label{gl2}
    \xi^n f = \sum_{k=0}^\infty \binom{n}{k}(E-1)^kE^{n-k}f\xi^{n-k}\qquad n\in\Z.
\end{equation}
The important fact is that the operator
$A=\sum_ia_i\delta^i$ has a unique $\xi$-representation
$A=\sum_ia'_i\xi^i$, and nonnegative (negative) order terms with
respect to $\delta$ transform into nonnegative (negative) order terms
with respect to $\xi$.

\section{Lax hierarchies}\label{d}

We define the associative algebra of $\delta$-pseudo-differential
operators
\begin{equation*}
 \alg= \alg_{\geqslant k}\oplus \alg_{< k}= \pobr{\sum_{i\geqslant k}u_{i}(x)\delta^{i}}\oplus
 \pobr{\sum_{i<k}u_{i}(x)\delta^{i}},
\end{equation*}
equipped with a Lie bracket given by the commutator $[A,B] = AB-BA$,
where $A,B\in\alg$. The classical $R$-matrices following from the
decomposition of $\alg$ into Lie subalgebras are
\begin{equation}\label{rmat}
    R = \frac{1}{2}(P_{\me k} - P_{< k}) = P_{\me k} - \frac{1}{2} = \frac{1}{2} - P_{< k},
\end{equation}
where $k=0$ or $1$ and projections are such that
\begin{equation}\label{pro}
    A_{\me k} = \sum_{i\me k} a_i\delta^i\qquad\text{for}\qquad A = \sum_i a_i\delta^i.
\end{equation}

According to the classical $R$-matrices \eqref{rmat} we have two Lax hierarchies of commuting
evolution equations \cite{BSS}
\begin{equation}\label{laxh}
       L_{t_n} = \brac{\bra{L^\frac{n}{N}}_{\me k},L}\qquad k=0,1\quad n\in\N,
\end{equation}
generated, in general, by fractional powers of some Lax operator
$L$. The general admissible finite-field
 Lax operators have the form \cite{BSS}
\begin{equation}\label{lax}
    L = u_N\delta^N + u_{N-1}\delta^{N-1}+ \ldots +u_1\delta + u_0 + \delta^{-1} u_{-1} + \sum_s\psi_s\delta^{-1}\varphi_s,
\end{equation}
where for $k=0$ the field $u_N$ is a nonzero time-independent
field and $u_{-1}=0$. Analysing \eqref{laxh} one finds that
\begin{equation*}
   \left . (-\mu)^{N+k-1}L_{t_n}\right |_{\delta=-\frac{1}{\mu}} =0,
\end{equation*}
for details see \cite{BSS}. Hence, there arises natural constraint
between dynamical fields from  \eqref{lax} given by
\begin{equation}\label{const}
    \sum_{i=-k}^{N+k-1}(-\mu)^{N+k-1-i} u_i + (-\mu)^{N+k}\sum_s \psi_s\varphi_s = a,
\end{equation}
where $a$ is time-independent function (for $k=1$ nonzero when $\mu=0$)\footnote{Notice
that in the formulae of Theorem 3.5 from \cite{BSS} there are misprints
in degrees of powers of $\mu$.}. Notice
that, the constraint \eqref{const} is compatible with the dynamics of
Lax hierarchies \eqref{laxh}. Using \eqref{const} one can
eliminate one dynamical-field. The convenient choice is to
eliminate the field $u_{N-1}$ for $k=0$ and the field $u_N$ for
$k=1$.

It is clear that in the case of $\Time = \Rm$ the Lax hierarchies
\eqref{laxh} yield field soliton systems and in particular for Lax
operators in the form \eqref{lax} one recovers the results of
\cite{Konopelchenko,OS}. In the case of $\Time=\Z$ one obtains
lattice soliton systems that are equivalent to the ones considered
in \cite{Oevel4}. Notice that in \cite{Oevel4} the $R$-matrix for
$k=1$ has a slight different form, this follows from the fact that
the construction in \cite{Oevel4} is by means of shift operators.
Similarly for $\Time =\Km_q$ the Lax hierarchies leads to
$q$-discrete soliton systems, etc.

\section{Hamiltonian structures}\label{e}

Let $A=\sum_ia_i\delta^i$ be a $\delta$-pseudo-differential
operator. We define the trace form by
\begin{equation}\label{tr}
    \tr A := -\int_\Time \frac{1}{\mu}\left. A_{<0}\right |_{\delta=-\frac{1}{\mu}}\ \Delta x\equiv
    \int_\Time \sum_{i<0}(-\mu)^{-i-1}a_i\ \Delta x,
\end{equation}
where $A_{<0}$ is the projection onto negative terms. To show that
the substitution $\delta=-\frac{1}{\mu}$ in \eqref{tr} is
well-posed, we state  the following proposition.

\begin{proposition}
Let the $\delta$-differential operators $A$ and $B$ be such that $(AB)_{<0} = AB$, then
\begin{equation}\label{sub}
    \int_\Time \frac{1}{\mu}\left. AB\right |_{\delta=-\frac{1}{\mu}}\ \Delta x
    = \int_\Time \frac{1}{\mu}\left. A\right |_{\delta=-\frac{1}{\mu}}\left.B\right |_{\delta=-\frac{1}{\mu}}\ \Delta x .
\end{equation}
\end{proposition}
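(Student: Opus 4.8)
The plan is to reduce \eqref{sub} to a term-by-term statement about monomials and then to invoke Lemma \ref{lemma}, which is exactly the algebraic fact that renders the substitution $\delta=-\frac1\mu$ insensitive to the reordering forced by the generalized Leibniz rule \eqref{gl}. Writing $A=\sum_i a_i\delta^i$ and $B=\sum_j b_j\delta^j$, I would first observe that by bilinearity it suffices to treat a single product $a_i\delta^i\cdot b_j\delta^j$. Moving $\delta^i$ across the coefficient $b_j$ by means of \eqref{gl} gives
\begin{equation*}
a_i\delta^i b_j\delta^j=a_i\sum_{k\me 0} S_k^i(b_j)\,\delta^{i+j-k}.
\end{equation*}

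Next I would substitute $\delta=-\frac1\mu$ directly into the right-hand side. Since $\delta^{i+j-k}$ becomes the scalar factor $(-\mu)^{-(i+j-k)}=(-\mu)^{-(i+j)}(-\mu)^{k}$ and the factor $(-\mu)^{-(i+j)}$ is independent of $k$, it can be pulled out of the $k$-summation, leaving
\begin{equation*}
\left. a_i\delta^i b_j\delta^j\right|_{\delta=-\frac1\mu}
=a_i\,(-\mu)^{-(i+j)}\sum_{k\me 0}(-\mu)^k S_k^i(b_j).
\end{equation*}
The inner sum is precisely the operator on the left-hand side of \eqref{rel1} acting on $b_j$, so Lemma \ref{lemma} collapses it to $b_j$ and the substituted monomial reduces to $\bra{a_i(-\mu)^{-i}}\bra{b_j(-\mu)^{-j}}=\left.a_i\delta^i\right|_{\delta=-\frac1\mu}\left.b_j\delta^j\right|_{\delta=-\frac1\mu}$. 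Summing over $i$ and $j$ then yields the pointwise multiplicativity $\left.AB\right|_{\delta=-\frac1\mu}=\left.A\right|_{\delta=-\frac1\mu}\left.B\right|_{\delta=-\frac1\mu}$, and multiplying by $\frac1\mu$ and $\Delta$-integrating over $\Time$ produces \eqref{sub}.

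The hypothesis $(AB)_{<0}=AB$ is what I would use to turn this formal computation into a well-posed one. When $AB$ has only negative-order terms, its coefficients are paired with factors $(-\mu)^{-m}$, $m<0$, i.e.\ with strictly positive powers of $\mu$; after multiplication by $\frac1\mu$ the integrand therefore contains only nonnegative powers of $\mu$ and in particular remains finite at dense points where $\mu=0$, which is exactly the well-posedness of the substitution claimed for \eqref{tr}. I would record at the start that under this hypothesis $\left.AB\right|_{\delta=-\frac1\mu}$ is computed from $(AB)_{<0}$ alone and defines a genuine function on $\Time$.

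The main obstacle is analytic rather than algebraic: for operators carrying infinitely many negative-order terms, the reduction ``by bilinearity'' together with extracting $(-\mu)^{-(i+j)}$ from the $k$-sum silently interchanges summations in a triply infinite family. I would justify this by noting that, for each fixed output order $m$, only the finitely many triples $(i,j,k)$ with $k=i+j-m\me 0$ and with $i,j$ bounded above by the highest orders of $A$ and $B$ contribute, so that each coefficient of $AB$ is a finite sum, and that the decay assumptions on the fields make the resulting $\Delta$-integral absolutely convergent. One further point demanding care is that in $(-\mu)^k S_k^i$ the multiplier $\mu$ must stand to the left of the string $S_k^i$ exactly as in \eqref{rel1}; this is automatic because the substitution replaces each $\delta^{i+j-k}$ by a scalar evaluated at the base point, and one must resist commuting $\mu$ through the shifts hidden inside $S_k^i$.
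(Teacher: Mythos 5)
Your proof is correct and follows essentially the same route as the paper's: reduce to monomials, push $\delta^i$ past the coefficient of $B$ via the generalized Leibniz rule \eqref{gl}, substitute $\delta=-\frac{1}{\mu}$, and collapse $\sum_{k\geqslant 0}(-\mu)^kS_k^i$ to the identity using Lemma \ref{lemma}. The extra remarks on well-posedness at dense points and on the finiteness of the coefficient sums are sound refinements of the same argument rather than a different approach.
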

\begin{proof}
It is enough to consider $A=a\delta^m$ and $B=b\delta^n$ such that $m+n<0$, thus
\begin{align*}
   \tr (AB) &= -\int_\Time \frac{1}{\mu}\left. a\delta^m b\delta^n\right |_{\delta=-\frac{1}{\mu}}\ \Delta x
   = -\int_\Time \frac{1}{\mu}\left. a \sum_{k\me 0} S_k^m b\delta^{m+n-k}\right |_{\delta=-\frac{1}{\mu}}\ \Delta x\\
    &=  \int_\Time a \sum_{k\me 0}(-\mu)^{k-m-n-1}S_k^m b\ \Delta x =
    \int_\Time ab(-\mu)^{-m-n-1}\ \Delta x,
\end{align*}
where the equality \eqref{rel1} is used. Consequently \eqref{sub} follows.
\end{proof}

Roughly speaking, the above proposition implies that the
multiplication operation in the algebra $\alg$ of
$\delta$-pseudo-differential operators commutes with the
substitution $\delta = -\frac{1}{\mu}$ given in the  trace form
\eqref{tr}.

Note that, in particular for the points $x\in\Time$ such that
$\mu(x)= 0$, the trace form \eqref{tr} turns out to be
\begin{equation*}
    \tr A = \int_\Time a_{-1}\ \Delta x,
\end{equation*}
where $A=\sum_ia_i\delta^i$. Thus when $\Time=\Rm$, we recover the
trace formula for the algebra of pseudo-differential operators
\cite{Adler}. For the case $\mu(x)\neq 0$, the trace form
\eqref{tr} within the algebra of $\xi$-operators is given by
\begin{equation*}
    \tr A := -\int_\Time \frac{1}{\mu}\left. A_{<0}\right |_{\xi=-1}\ \Delta x\equiv
    -\int_\Time \frac{1}{\mu}\sum_{i<0}(-1)^ia'_i\ \Delta x,
\end{equation*}
where $A=\sum_ia'_i\xi^i$.

\begin{theorem}
The inner product on $\alg$ defined by the bilinear map
\begin{equation}\label{inner}
    (\cdot,\cdot)_\alg:\alg\times\alg\to\Km\qquad (A,B)_\alg:= \tr (AB),
\end{equation}
in terms of the trace \eqref{tr}, is nondegenerate, symmetric and
$\ad$-invariant, i.e.
\begin{equation*}
    \bra{A,[B,C]}_\alg + \bra{[B,A],C}_\alg = 0.
\end{equation*}
\end{theorem}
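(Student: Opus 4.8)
The plan is to establish the three properties in the order \emph{symmetry}, \emph{$\ad$-invariance}, \emph{nondegeneracy}, since the last two can be made to rest on the first.

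\textbf{Symmetry.} By bilinearity of the trace it suffices to verify $\tr(AB)=\tr(BA)$ on monomials $A=a\delta^m$, $B=b\delta^n$. When $m+n<0$ the product $a\delta^m b\delta^n$ consists only of negative-order terms, so the preceding proposition applies verbatim and, via the identity \eqref{rel1} of Lemma~\ref{lemma}, yields $\tr(a\delta^m b\delta^n)=\int_\Time ab\,(-\mu)^{-m-n-1}\,\Delta x$; this is manifestly invariant under $(a,m)\leftrightarrow(b,n)$, so symmetry is immediate in this range, and both sides vanish trivially when $m,n\me0$. The genuine work is the mixed case $m+n\me0$ with exactly one of $m,n$ negative: here the two orderings produce, a priori, different-looking expressions, a \emph{finite} sum $\int_\Time\sum_k(-\mu)^{k-m-n-1}a\,S_k^m(b)\,\Delta x$ against an \emph{infinite} one in $S_\ell^n(a)$. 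First I would extract the negative part through \eqref{tr}, and then apply the integration-by-parts/adjoint relation \eqref{ap} repeatedly, together with the recurrences \eqref{rec1}--\eqref{rec2}, to transport every $\Delta$ and $E^{-1}$ off $b$ and onto $a$, after which \eqref{rel1} resums the tail and matches the two sides. Equivalently this amounts to $\tr[A,B]=0$, i.e.\ to showing that the trace density $-\tfrac1\mu[A,B]_{<0}\big|_{\delta=-1/\mu}$ is an exact $\Delta$-derivative and hence integrates to zero by the decay hypotheses; this is the time-scale version of ``the residue of a commutator is a total derivative,'' and its bookkeeping is the main obstacle of the whole proof.

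\textbf{$\ad$-invariance.} Granted symmetry, this is pure algebra. Expanding the commutators,
\[
(A,[B,C])_\alg+([B,A],C)_\alg
=\tr(ABC)-\tr(ACB)+\tr(BAC)-\tr(ABC)
=\tr(BAC)-\tr(ACB),
\]
and $\tr(BAC)=\tr\bra{B(AC)}=\tr\bra{(AC)B}=\tr(ACB)$ by cyclicity, which is precisely symmetry applied to $B$ and $AC$. Hence the expression vanishes.

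\textbf{Nondegeneracy.} Suppose $A=\sum_i a_i\delta^i\neq0$; I must exhibit some $B$ with $\tr(AB)\neq0$. I would test $A$ against $B=\delta^{-j-1}g$ for all $j\in\Z$ and all $\Delta$-smooth, rapidly decaying $g$. Using cyclicity one has $\tr\bra{A\delta^{-j-1}g}=\tr\bra{gA\,\delta^{-j-1}}$, and since $gA=\sum_i(ga_i)\delta^i$ while powers of $\delta$ compose additively, $gA\,\delta^{-j-1}=\sum_i(ga_i)\delta^{i-j-1}$ is already in normal form; the trace \eqref{tr} then retains only the negative-order terms, giving the clean formula
\[
\tr\bra{A\delta^{-j-1}g}=\int_\Time g\,c_j\,\Delta x,\qquad
c_j:=\sum_{i\les j}(-\mu)^{\,j-i}a_i .
\]
If $\tr(AB)=0$ for every $B$, then $\int_\Time g\,c_j\,\Delta x=0$ for all such $g$ and all $j$, forcing $c_j\equiv0$ for every $j$; the telescoping relation $c_j=a_j-\mu\,c_{j-1}$, i.e.\ $a_j=c_j+\mu\,c_{j-1}$, then propagates this to $a_j\equiv0$ for all $j$, so $A=0$. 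The only delicate point I anticipate here is the convergence (formal-series) status of $c_j$ when $A$ carries an infinite negative-order tail, which I would treat in the $\xi$-representation, recovering the classical nondegenerate residue pairings in the dense limit $\Time=\Rm$ \cite{Adler} and in the purely scattered limit $\Time=\hk\Z$ \cite{Suris}.
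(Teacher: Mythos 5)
Your overall architecture is sound: the easy symmetry cases, the reduction of $\ad$-invariance to symmetry plus associativity, and the nondegeneracy argument all check out. Indeed your nondegeneracy proof (pairing against $B=\delta^{-j-1}g$, extracting $c_j=\sum_{i\les j}(-\mu)^{j-i}a_i$ and inverting via $a_j=c_j+\mu c_{j-1}$) is more explicit than the paper's, which merely asserts that nondegeneracy ``follows immediately from the definition of the trace.'' The $\ad$-invariance computation is identical to the paper's in substance.

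However, there is a genuine gap at the heart of the proof: the mixed case $m>0$, $n<0$ of the symmetry claim, which you correctly identify as ``the main obstacle of the whole proof,'' is left as a plan rather than carried out. Worse, the mechanism you sketch is doubtful. You propose to move operators across the integral using \eqref{ap} and the recurrences \eqref{rec1}--\eqref{rec2} and then let \eqref{rel1} ``resum the tail,'' but \eqref{rel1} is the identity $\sum_{k\me0}(-\mu)^kS_k^n=1$ over the \emph{full} range of $k$, whereas in the mixed case the trace retains only the terms with $k\me m+n+1$, so \eqref{rel1} does not apply; it is the tool for the Proposition (the case $(AB)_{<0}=AB$), not for this one. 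Moreover, $S_k^m$ contains powers of $E$ as well as $\Delta$, and the adjoint of $E$ is not $E^{-1}$ but involves conjugation by $\mu$ (cf.\ \eqref{adj2}), so ``transporting every $\Delta$ and $E^{-1}$ off $b$ onto $a$'' via \eqref{ap} alone does not close. The paper resolves this by splitting into $\mu(x)=0$ and $\mu(x)\neq0$: in the first case $E=1$ and the generalized Leibniz rule collapses to the binomial form \eqref{gl1}, whose converse \eqref{conv} together with \eqref{ap} gives $\tr(BA)=\int\binom{m}{m+n+1}a\Delta^{m+n+1}b\,\Delta x=\tr(AB)$ directly; in the second case one passes to the $\xi$-representation \eqref{gl2}, uses the converse formula for $\xi$ and the identity $\bra{\tfrac1\mu f(E)}^\dag=\tfrac1\mu f(E^{-1})$ derived from \eqref{adj2}, and matches the two finite sums term by term. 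Your alternative suggestion --- proving that the residue of a commutator is an exact $\Delta$-derivative --- is a legitimate route in principle, but on a general regular time scale that statement is itself exactly as hard as the computation you are avoiding, and you give no argument for it. As it stands, the central step of the theorem is asserted, not proved.
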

\begin{proof}
The nondegeneracy of \eqref{inner} follows immediately from the definition of the trace.

In order to show that \eqref{tr} is symmetric, it is enough to
consider the monomials $A=a\delta^m$ and $B=b\delta^n$. Then, if
$m,n\me 0$, we have $\tr (AB) = \tr (BA) = 0$. If $m, n<0$ the
symmetricity immediately follows  from \eqref{sub}. Thus, it
remains to prove the case when one of the operators
 $A$ and $B$ is of positive order and the other one is of negative order.

Without loss of generality, let $m>0$ and $n<0$. We consider the
cases $\mu(x)=0$ and $\mu(x)\neq 0$, separately.

For $\mu(x) = 0$, we have
\begin{align*}
    \tr (AB) = \tr \bra{a\delta^m b\delta^n} = \tr \bra{\sum_{k=0}^m\tbinom{m}{k}a\Delta^k b\delta^{m+n-k}} = \int_\Time \tbinom{m}{m+n+1}a\Delta^{m+n+1}b\ \Delta x.
\end{align*}
The converse formula for \eqref{gl1} has the form
\begin{equation}\label{conv}
 u\delta^n = \sum_{k=0}^\infty \delta^{n-k}\binom{n}{k}{\Delta^\dag}^k u.
\end{equation}
Hence
\begin{align*}
    \tr (BA) &= \tr \bra{b\delta^n a\delta^m} = \tr \bra{\sum_{k=0}^m\tbinom{m}{k}b\delta^{m+n-k}{\Delta^\dag}^k a} = \int_\Time \tbinom{m}{m+n+1}b{\Delta^\dag}^{m+n+1}a\ \Delta x\\
    &= \int_\Time \tbinom{m}{m+n+1}a\Delta^{m+n+1}b\ \Delta x = \tr (AB),
\end{align*}
where we make use of \eqref{ap}.

For $\mu(x)\neq 0$, we pass to the calculations in terms of
$\xi$-pseudo-differential operators. Let $A=a\xi^m$ and $B=b\xi^n$
with $m>0$ and $n<0$. We have
\begin{align*}
    \tr (AB) &= \tr \bra{a\xi^m b\xi^n} = \tr \bra{\sum_{k=0}^m\tbinom{m}{k}a(E-1)^kE^{m-k}b\xi^{m+n-k}}\\
    &= -\int_\Time \frac{1}{\mu}\sum_{k=m+n+1}^m\tbinom{m}{k}(-1)^{m+n-k}a(E-1)^kE^{m-k}b\ \Delta x.
\end{align*}
The converse formula for \eqref{gl2} is given by
\begin{equation*}
  f\xi^n = \sum_{k=0}^\infty \xi^{n-k}\binom{n}{k}\bra{E^{-1}-1}^kE^{k-n}f
\end{equation*}
Let $f(E)$ be a polynomial function of $E$. Then by \eqref{adj2},
it follows that
\begin{equation*}
    \bra{\frac{1}{\mu}f\bra{E}}^\dag = \frac{1}{\mu}f\bra{E^{-1}}.
\end{equation*}
Therefore
\begin{align*}
    \tr (BA) &= \tr \bra{b\xi^n a\xi^m} = \tr \bra{\sum_{k=0}^m\tbinom{m}{k}b\xi^{m+n-k}\bra{E^{-1}-1}^kE^{k-m}a}\\
    &= -\int_\Time \frac{1}{\mu}\sum_{k=0}^m\tbinom{m}{k}(-1)^{m+n-k}b\bra{E^{-1}-1}^kE^{k-m}a\ \Delta x\\
    &= -\int_\Time \frac{1}{\mu}\sum_{k=m+n+1}^m\tbinom{m}{k}(-1)^{m+n-k}a(E-1)^kE^{m-k}b\ \Delta x
    = \tr (AB).
\end{align*}

The symmetricity of the trace functional on the algebra of
$\xi$-pseudo-differential operators implies the symmetricity of
the trace functional on the algebra of
$\delta$-pseudo-differential operators for  $\mu(x)\neq 0$.

Hence, the inner product \eqref{inner} is symmetric. Finally, the
adjoint invariance of \eqref{inner} follows from the fact that the
inner product \eqref{inner} is symmetric and the multiplication
operation defined on the algebra $\alg$ of
$\delta$-pseudo-differential operators is associative.
\end{proof}

For the next proposition, we have to consider $(1+\mu\delta)^{-1}$
and its expansion such that it is valid for all points of $\Time$
including the case $\mu=0$. Besides, we assume that the expansion
of $(1+\mu\delta)^{-1}$ is given by nonnegative order terms in
$\delta$-pseudo-differential operators. We derive the following
expansion
\begin{equation}\label{exp}
    (1+\mu\delta)^{-1} := \sum_{k=0}^\infty(-\delta)^k(\mu^k+\Delta\mu^{k+1}) \equiv
    \sum_{k=0}^\infty(-\delta)^k\frac{E\mu^{k+1}}{\mu},
\end{equation}
which can be verified by multiplying both sides of the expression
\eqref{exp} with $(1+\mu\delta)$ from right-hand side. Hence
\begin{align*}
      &(1+\mu\delta)^{-1}(1+\mu\delta) = \sum_{k=0}^\infty(-\delta)^k\frac{E\mu^{k+1}}
      {\mu} + \sum_{k=0}^\infty (-\delta)^k E\mu^{k+1}\delta\\
    &\qquad= \sum_{k=0}^\infty(-\delta)^k\frac{E\mu^{k+1}}{\mu} - \sum_{k=0}^\infty (-\delta)^{k+1}\mu^{k+1}
    - \sum_{k=0}^\infty (-\delta)^k\Delta\mu^{k+1}\\
    &\qquad= \frac{E\mu}{\mu} - \Delta\mu + \sum_{k=1}^\infty(-\delta)^k\bra{\frac{E\mu^{k+1}}{\mu}
    - \mu^k - \Delta\mu^{k+1}} = 1,
\end{align*}
where \eqref{rel2} and the converse formula \eqref{conv} are used.

\begin{proposition}
The alternative formula of the trace form \eqref{tr} is given by
\begin{equation}\label{tr2}
    \tr A = \int_\Time \frac{E^{-1}\mu}{\mu}\res\bra{A(1+\mu\delta)^{-1}} \Delta x,
\end{equation}
where
\begin{equation*}
    \res A := a_{-1}\qquad\text{for}\qquad A = \sum_i a_i\delta^i.
\end{equation*}
\end{proposition}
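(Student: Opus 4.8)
The plan is to establish the pointwise identity
\begin{equation*}
    \frac{E^{-1}\mu}{\mu}\,\res\bra{A(1+\mu\delta)^{-1}} = -\frac{1}{\mu}\left.A_{<0}\right|_{\delta=-\frac{1}{\mu}}
\end{equation*}
at every point where $\mu\neq 0$; integrating it over $\Time$ and comparing with the definition \eqref{tr} then yields \eqref{tr2} at once. Both sides are linear in $A$, so it suffices to verify the identity on a monomial $A=a\delta^m$, distinguishing the cases $m\me 0$ and $m<0$. The only ingredient is the expansion \eqref{exp}, which realizes $(1+\mu\delta)^{-1}=\sum_{k=0}^{\infty}(-1)^k\delta^k\frac{E\mu^{k+1}}{\mu}$ as a strictly nonnegative-order operator.

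For $m\me 0$ the product $a\delta^m(1+\mu\delta)^{-1}=a\sum_{k=0}^{\infty}(-1)^k\delta^{m+k}\frac{E\mu^{k+1}}{\mu}$ is again of nonnegative order, since by \eqref{gl} each $\delta^{m+k}$ with $m+k\me 0$ produces only the powers $\delta^{m+k},\dots,\delta^{0}$; hence its residue vanishes identically, in agreement with $\tr(a\delta^m)=0$ and $A_{<0}=0$. For $m<0$ I would extract the coefficient of $\delta^{-1}$ from each summand $\delta^{m+k}\frac{E\mu^{k+1}}{\mu}=\sum_{p}S_p^{m+k}\bra{\frac{E\mu^{k+1}}{\mu}}\delta^{m+k-p}$. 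A nonzero contribution requires $m+k-p=-1$ with $p\me 0$, and inspecting the support of $S_p^{m+k}$ forces $m+k=-1$; thus a single index $k=-m-1$ survives, with $p=0$ and $S_0^{-1}=E^{-1}$, the leading coefficient in the expansion of $\delta^{-1}$.

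The surviving term yields $\res\bra{a\delta^m(1+\mu\delta)^{-1}}=(-1)^{-m-1}a\,E^{-1}\frac{E\mu^{-m}}{\mu}=(-1)^{-m-1}a\,\frac{\mu^{-m}}{E^{-1}\mu}$, where $E^{-1}$ acts only on the explicit coefficient function. Multiplying by the weight $\frac{E^{-1}\mu}{\mu}$ cancels the factor $(E^{-1}\mu)^{-1}$ and leaves $(-\mu)^{-m-1}a$, which is exactly the integrand of \eqref{tr} and coincides with $-\frac{1}{\mu}\left.a\delta^m\right|_{\delta=-\frac{1}{\mu}}$; this is precisely the role of the awkward weight $E^{-1}\mu/\mu$, namely to absorb the backward shift introduced by $S_0^{-1}=E^{-1}$. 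The step I expect to require the most care is the bookkeeping of shifts: verifying that, once $k=-m-1$ is selected, the operator $E^{-1}$ hits only $\mu^{-m}$ and leaves no residual shift acting on $a$, so that the identity is genuinely pointwise and needs no integration by parts. Finally, at dense points $\mu=0$ one has $(1+\mu\delta)^{-1}=1$ and reads the weight $E^{-1}\mu/\mu$ as $1$, whereupon \eqref{tr2} reduces to $\tr A=\int_\Time a_{-1}\,\Delta x$, matching the degenerate form of \eqref{tr}; this case must be recorded separately, since the substitution $\delta=-\frac{1}{\mu}$ is unavailable there.
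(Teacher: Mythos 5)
Your proposal is correct and follows essentially the same route as the paper: expand $(1+\mu\delta)^{-1}$ via \eqref{exp}, observe that for $a\delta^m$ with $m<0$ only the term $k=-m-1$ contributes to the residue through the leading coefficient $S_0^{-1}=E^{-1}$ of $\delta^{-1}$, and check that the weight $E^{-1}\mu/\mu$ cancels the resulting backward shift to reproduce the integrand $(-\mu)^{-m-1}a$ of \eqref{tr}. Your explicit use of the support of $S_p^{n}$ to rule out other contributions, and your separate remark on the dense-point case, only make explicit what the paper leaves to an ellipsis.
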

\begin{proof}
We  first calculate the residue. Thus
\begin{align*}
        &\res\bra{A(1+\mu\delta)^{-1}} = \res \bra{\sum_{k=0}^\infty\sum_i(-1)^ka_i\delta^{i+k}\frac{(E\mu)^{k+1}}{\mu}}\\
         &\qquad= \res \bra{\sum_{i<0}(-1)^{-i-1}a_i\delta^{-1}\frac{(E\mu)^{-i}}{\mu} + \ldots}
         = \res \bra{\sum_{i<0}(-1)^{-i-1}a_i\frac{\mu^{-i}}{E^{-1}\mu}\delta^{-1} + \ldots}\\
         &\qquad= -\sum_{i<0}\frac{(-\mu)^{-i}}{E^{-1}\mu}a_i.
\end{align*}
Substituting the residue into the alternative trace form
\eqref{tr2}, we obtain
\begin{equation*}
    \tr A = \int_\Time \frac{E^{-1}\mu}{\mu}\res\bra{A(1+\mu\delta)^{-1}} \Delta x
    = \int_\Time \sum_{i<0}(-\mu)^{-i-1}a_i\ \Delta x.
\end{equation*}
Hence the trace forms  \eqref{tr} and \eqref{tr2} are equivalent.
\end{proof}

Notice that, the definition of the trace form given by \eqref{tr2}
is very similar to the trace formula introduced in \cite{G-G-S}.
However the trace form in \cite{G-G-S} is valid only either on
$\Time=\mathbb{R}$ or on regular-discrete time scales, i.e. on
such regular time scales where all points are isolated and
$\mu\neq 0$. One of the main contribution of this article is to
generalize the trace form given in \cite{G-G-S}. The trace form
\eqref{tr} is valid for an arbitrary regular time scale. In
particular when $\Time=\mathbb{R}$, the trace form \eqref{tr}
implies the standard trace formula for the algebra of
pseudo-differential operators. Thus, for the use of the
alternative trace form \eqref{tr2}, we have to choose that $(1+\mu\delta)^{-1}$
expands into nonnegative order $\delta$-operators \eqref{exp} and
the expansion is also valid when $\mu=0$ .

Observe that, one can define alternatively the following trace
form
\begin{equation}\label{atr}
    \tr' A := \int_\Time \frac{1}{\mu}\left. A_{\me 0}\right |_{\delta=-\frac{1}{\mu}}\ \Delta x\equiv
    -\int_\Time \sum_{i\me 0}(-\mu)^{-i-1}a_i\ \Delta x,
\end{equation}
valid on regular-discrete time scales, that is for $\mu\neq 0$. Choosing
the expansion of $(1+\mu\delta)^{-1}$ into negative order terms,

\begin{equation*}
    (1+\mu\delta)^{-1} := -\sum_{k=1}^\infty(-\delta)^{-k}\frac{1}{\mu E\mu^{k-1}},
\end{equation*}
and  $\mu\neq 0$, the formula \eqref{tr2} yields \eqref{atr}, i.e.
one recovers from \eqref{tr2} the trace formula of shift
operators, as in \cite{G-G-S}.

The shift operator can be introduced by the relation: $\e = 1 +
\mu\delta$, then $\e^mu = E^mu\e^m$ for $m\in\Z$. The expansion of
the operator $A$ by means of shift operators $\e$, i.e. $A =
\sum_i a'_i\e^i$ (we assume that $\delta^{-1}$ expands into
negative order terms of shift operator $\e$) allows us to obtain
from \eqref{atr} the standard trace form of the algebra of shift
operators
\begin{equation*}
    \tr' A := \int_\Time \frac{1}{\mu}a'_0\ \Delta x.
\end{equation*}

Although the traces \eqref{tr} and \eqref{atr} are not equivalent
in general, they are closely related to each other on
regular-discrete time scales. To be more precise, on
regular-discrete time scales when applied to the constrained
operators such that $\left. A\right |_{\delta=-\frac{1}{\mu}} =
const$, the traces \eqref{tr} and \eqref{atr} are equal up to a
constant, as
\begin{equation*}
\left. A_{\me 0}\right |_{\delta = -\frac{1}{\mu}} = -\left.
A_{<0}\right |_{\delta=-\frac{1}{\mu}} + const.
\end{equation*}
Notice that, this is the case of the Lax operators \eqref{lax} for
which the constraints \eqref{const} are taken into consideration.
Thus it is clear that in the case of lattice time scale, i.e.
$\Time = \Z$, the Hamiltonians defined below, as well as the
following scheme, are equivalent to the one from \cite{Oevel4}. By
similar observations one also finds that for $\Time = \Km_q$ one
recovers from \eqref{tr2} and \eqref{atr} the trace form of
$q$-discrete numbers, see the appendix of \cite{klr}.

To sum up, the trace form \eqref{tr} is valid on arbitrary regular
time scales and in particular for $\Time=\Rm$ give the standard
form of pseudo-differential operators. Besides, if the appropriate
constraints are taken into consideration, \eqref{tr} also recovers
the trace forms for $\Time=\Z$ of 'lattice' shift operators and
for $\Time=\Km_q$ of $q$-discrete numbers.

In order to define the Hamiltonian structures for \eqref{laxh} we
first need to find adjoint of $R$-matrices \eqref{rmat}, i.e
$R^\dag $, such that $(A,R B)_\alg = (R^\dag A,B)_\alg$. Using the
alternative definition of the trace form \eqref{tr2} one finds
that
\begin{equation}\label{ar}
    R^\dag = P_{\me k}^\dag - \frac{1}{2}\qquad k=0,1\ ,
\end{equation}
where
\begin{equation}\label{ar1}
    P_{\me k}^\dag A = \bra{A(1+\mu\delta)^{-1}}_{<-k}(1+\mu\delta)
\end{equation}
and the projections are such that
\begin{equation*}
    B_{< -k} = \sum_{i<-k}\delta^ib_i\qquad\text{for}\qquad B = \sum_i \delta^i b_i.
\end{equation*}
Notice that the above projections are defined on the operators
given in a different form than in \eqref{pro}.

The existence of the well-defined inner product \eqref{inner}
allows us to identify $\alg$ with its dual~$\alg^*$. Let
$\F(\alg\cong \alg^*)$ be the space of smooth function on $\alg$
consisting of functionals \eqref{fun}. The linear Poisson tensor
has the form \cite{Oevel1,Blaszak}
\begin{equation}\label{lin}
\begin{split}
    \pi_0 dH &= \brac{R dH, L} + R^\dag\brac{dH,L}\\
             &=  \brac{L, dH_{<k}} + \bra{\brac{dH,L}(1+\mu\delta)^{-1}}_{<-k}(1+\mu\delta)\qquad k=0,1,
\end{split}
\end{equation}
where $H\in \F(\alg)$.

We do not present explicit form of the differentials $dH$ with
respect to Lax operators \eqref{lax} as it would be  cumbersome,
but we  explain how to construct them. We postulate that
\begin{equation*}
    dH = \sum_{i=1}^n \delta^{i-N-k}\gamma_i,
\end{equation*}
where $n$ is the number of independent dynamical fields in
\eqref{lax} (the number of the rest of the dynamical fields after
taking the constraint \eqref{const} into consideration). Thus, we
look forward to express $\gamma_i$'s in terms of dynamical fields
of \eqref{lax} and their variational derivatives by the use of the
assumption
\begin{equation}\label{rule}
    \bra{dH, L_t}_\alg = \int_\Time \bra{\sum_{i-k}^{N+k-2} \var{H}{u_i}(u_i)_t
    + \sum_s \bra{\var{H}{\psi_s}(\psi_s)_t + \var{H}{\phi_s}(\psi_s)_t }}\Delta x .
\end{equation}

Some useful formulae to calculate linear Poisson tensors $\pi_0$
for the examples are derived as
\begin{align*}
    P_{\me 0}^\dag(a\delta^{-1}b) &= a\delta^{-1}b + \mu ab\\
    P_{\me 1}^\dag(a\delta^{-1}b) &= a\delta^{-1}b - \delta^{-1} ab\\
    P_{\me 1}^\dag(\delta^{-1}a\delta^{-1}b) &= \delta^{-1}a\delta^{-1}b + \delta^{-1} \mu ab.
\end{align*}

The case of the quadratic Poisson tensor $\pi_1$ is more delicate
and for the construction  there appears additional conditions on
$R$ and $R^\dag$ \cite{Oevel4,Suris,Oevel1,Blaszak}. The form of
\eqref{ar} with \eqref{ar1} does not allow us to proceed in a
standard way anymore, thus we omit the construction of the
quadratic Poisson tensor from the $R$-matrix scheme. In article
\cite{BSS}, we have constructed recursion operators $\Phi$ for the
Lax hierarchies \eqref{laxh}, such that
\begin{equation*}
    \Phi L_{t_{n}} = L_{t_{n+N}}.
\end{equation*}
Thus, as we know $\pi_0$, the quadratic Poisson tensor $\pi_1$ can be reconstructed
alternatively by
\begin{equation*}
    \pi_1 = \Phi \pi_0 .
\end{equation*}
The recursion operator $\Phi$ is hereditary at least on the vector
space spanned by the symmetries from the related Lax hierarchy.
Therefore the Poisson tensors $\pi_0$ and $\pi_1$ are compatible
\cite{Olver,Blaszak}.

Hence, the Lax hierarchies \eqref{laxh} have bi-Hamiltonian structure
\begin{equation*}
    L_{t_n} = \pi_0 dH_n = \pi_1 dH_{n-N},
\end{equation*}
where the related Hamiltonians are given by
\begin{equation*}
    H_n(L) = \frac{N}{n+N}\tr \bra{L^{\frac{n}{N}+1}}.
\end{equation*}
They are such that $dH_n = L^\frac{n}{N}$.

\section{Examples}\label{f}

\paragraph{$\Delta$-differential AKNS, $k=0$.}

The Lax operator \eqref{lax} for $N=1$, with the constraint
\eqref{const} ($a=0$) is of the form
\begin{equation}\label{a0}
    L = \delta + \mu \psi\varphi + \psi\delta^{-1}\varphi.
\end{equation}

The first and the second flows from the Lax hierarchy \eqref{laxh}
are
\begin{equation}\label{a1}
    \begin{split}
    \psi_{t_1} &= \mu\psi^2\varphi+\Delta\psi,\\
    \varphi_{t_1} &= -\mu\varphi^2\psi - \Delta^\dag\varphi.
    \end{split}
\end{equation}
and
\begin{equation}\label{a2}
    \begin{split}
    \psi_{t_2} &= \mu^2\psi^3\varphi^2 + 2\psi^2\varphi + \Delta^2\psi + \Delta\bra{\mu\psi^2\varphi}
    + 2\mu\psi\varphi\Delta\psi + \mu\psi^2\Delta^\dag\varphi\\
    \varphi_{t_2} &= -\mu^2\psi^2\varphi^3 - 2\psi\varphi^2 - {\Delta^\dag}^2\varphi - \Delta^\dag\bra{\mu\psi\varphi^2} - \mu\varphi^2\Delta\psi - 2\mu\psi\varphi\Delta^\dag\varphi .
    \end{split}
\end{equation}

For Lax operator \eqref{a0} the differential of an functional $H$, such that \eqref{rule}
is valid, is given by
\begin{equation*}
    dH = \frac{1}{\varphi}\var{H}{\psi} - \frac{1}{\psi}\Delta^\dag\bra{\frac{1}{\varphi}}
    \Delta^{-1}A - \delta\frac{1}{\psi\varphi + \mu\psi\Delta^\dag\varphi}\Delta^{-1}A,
\end{equation*}
where
\begin{equation*}
    A = \psi\var{H}{\psi} - \varphi\var{H}{\varphi}
\end{equation*}
and $\Delta^{-1}$ is a formal inverse of $\Delta$.
Then, one finds the linear Poisson tensor \eqref{lin}
\begin{equation*}
    \pi_0 = \pmatrx{0 & 1\\ -1 & 0}.
\end{equation*}
The recursion operator constructed in \cite{BSS} has the form
\begin{equation*}
    \Phi =
    \pmatrx{\Delta + 2\mu\psi\varphi + 2\psi\Delta^{-1}\varphi & \mu\psi^2 + 2\psi\Delta^{-1}\psi\\
    -\mu\varphi^2 - 2\varphi\Delta^{-1}\varphi & \Delta^\dag - 2\varphi\Delta^{-1}\psi}.
\end{equation*}
Hence, the quadratic Poisson tensor is
\begin{equation}\label{quadratic}
    \pi_1 = \Phi\pi_0 =
    \pmatrx{-\mu\psi^2 - 2\psi\Delta^{-1}\psi & \Delta + 2\mu\psi\varphi + 2\psi\Delta^{-1}\varphi\\
     -\Delta^\dag + 2\varphi\Delta^{-1}\psi &  -\mu\varphi^2 - 2\varphi\Delta^{-1}\varphi }.
\end{equation}
The skew-symmetricity of  \eqref{quadratic} follows from
\eqref{rel2} and \eqref{dag}. The  first three Hamiltonians are
\begin{align*}
    H_0 &= \int_\Time \psi\varphi\ \Delta x\\
    H_1 &= \int_\Time \bra{\frac{1}{2}\mu\psi^2\varphi^2 + \varphi\Delta\psi} \Delta x\\
    H_2 &= \int_\Time \bra{\frac{1}{3}\mu^2\psi^3\varphi^3 + \psi^2\varphi^2 + \varphi\Delta^2\psi
       + \mu\psi\varphi^2\Delta\psi + \mu\psi^2\varphi\Delta^\dag\varphi} \Delta x\\
        &\ \ \vdots\ .
\end{align*}

Particularly, when $\Time=\Rm$ the above bi-Hamiltonian hierarchy
is exactly the bi-Hamiltonian field soliton AKNS hierarchy
\cite{OS}. In this case the first nontrivial flow is the second
one \eqref{a2}, i.e. the AKNS system. When $\Time=\Z$ and
$\Time=\Km_q$ we get the lattice and $q$-discrete counterparts of
the AKNS hierarchy where the first nontrivial flow is \eqref{a1}.
Besides, for $\Time=\Z$ the system \eqref{a1}, together with its
bi-Hamiltonian structure, is equivalent to the system considered
in \cite{Oevel4}.

\paragraph{$\Delta$-differential Kaup-Broer, $k=1$.}

Consider the following Lax operator with the constraint
\eqref{const} ($a=1$)
\begin{equation*}
L= (1+\mu v-\mu^2w)\delta + v + \delta^{-1}w.
\end{equation*}

The first and the second flows are
\begin{equation}\label{k1}
    \begin{split}
     v_{t_1}&=(1+\mu v-\mu^2w)\Delta v - \mu \Delta^\dag (w+\mu vw-\mu^2w^2),\\
     w_{t_1}&= -\Delta^\dag\bra{w+\mu vw -\mu^2w^2}
    \end{split}
\end{equation}
and
\begin{equation}\label{k2}
    \begin{split}
    v_{t_2}&= u \Delta\bra{v^2 + 2\tilde{u}w + \tilde{u}\Delta v + \mu \Delta^\dag(\tilde{u}w)}
    - \mu\Delta^\dag\bra{2\tilde{u}vw + \mu\tilde{u}w\Delta v + u\Delta^\dag(\tilde{u}w)}\\
    w_{t_2}&= -\Delta^\dag\bra{2\tilde{u}vw + \mu\tilde{u}w\Delta v + \tilde{u}\Delta^\dag(\tilde{u}w)},
    \end{split}
\end{equation}
where
\begin{equation*}
    \tilde{u} := 1+\mu v-\mu^2w.
\end{equation*}

The differentials are given in the form
\begin{equation*}
    dH = \delta^{-1}\var{H}{v} + \var{H}{w} + \mu \var{H}{v}.
\end{equation*}
Thus,  the linear Poisson tensor \eqref{lin} is
\begin{equation*}
    \pi_0 = \pmatrx{\tilde{u}\Delta\mu - \mu\Delta^\dag\tilde{u} & \tilde{u}\Delta\\ -\Delta^\dag \tilde{u} & 0}.
\end{equation*}
The recursion operator has the form \cite{BSS}
\begin{equation*}
 \Phi =   \pmatrx{w + \tilde{u}\Delta + R & \mu v - \mu w + (2+\mu\Delta^\dag)\tilde{u}- R\mu\\
    w -\Delta^\dag\tilde{u}w\Delta^{-1}\tilde{u}^{-1} & \Delta^\dag\tilde{u} + v - \mu w \Delta^\dag\tilde{u}w\Delta^{-1}\mu\tilde{u}^{-1}},
\end{equation*}
where
\begin{equation*}
        R = \tilde{u} \Delta v\Delta^{-1}\tilde{u}^{-1} -\mu\Delta^\dag\tilde{u} w\Delta^{-1}\tilde{u}^{-1}.
\end{equation*}
Hence
\begin{equation*}
    \pi_1 = \Phi\pi_0 =
    \pmatrx{ \pi_{vv} & \tilde{u}\Delta v + \tilde{u}\Delta \tilde{u}\Delta
    + \mu \tilde{u}w\Delta - \mu\Delta^\dag\tilde{u}w\\
           -v\Delta^\dag\tilde{u} -\Delta^\dag\tilde{u}\Delta^\dag\tilde{u}
           + \tilde{u}w\Delta\mu - \Delta^\dag\mu\tilde{u}w  & \tilde{u}w\Delta - \Delta^\dag\tilde{u}w},
\end{equation*}
where
\begin{equation*}
    \pi_{vv} = \tilde{u}\Delta\mu v - \mu v\Delta^\dag \tilde{u} + \tilde{u}\Delta\tilde{u} - \tilde{u}\Delta^\dag\tilde{u} + \tilde{u}\Delta\tilde{u}\Delta\mu - \mu\Delta^\dag\tilde{u}\Delta^\dag\tilde{u} + \mu\tilde{u}w\Delta\mu - \mu\Delta^\dag\mu\tilde{u}w .
\end{equation*}

The Hamiltonians are
\begin{align*}
    H_0 &= \int_\Time w\ \Delta x\\
    H_1 &= \int_\Time \bra{vw - \frac{1}{2}\mu w^2} \Delta x\\
    H_2 &= \int_\Time \bra{w^2 + v^2w + (w+\mu vw-\mu^2w^2)\Delta v - \frac{2}{3}\mu^2 w^3} \Delta x\\
        &\ \ \vdots\ .
\end{align*}

When $\Time=\Rm$ the above construction recovers the field
Kaup-Broer hierarchy with its bi-Hamiltonian structure
\cite{Konopelchenko}. As previously, the Kaup-Broer system is
given only by the second flow \eqref{k2}. In the lattice case,
i.e. of $\Time=\Z$, the above bi-Hamiltonian hierarchy is
equivalent to the relativistic Toda hierarchy considered in
\cite{Oevel4} and, \eqref{k1} is equivalent to the relativistic
Toda system.

\section{Conclusions}

We have presented a unified theory of the construction of the
bi-Hamiltonian nonlinear evolution hierarchies such as field,
lattice and $q$-discrete soliton hierarchies. Actually, we took
advantage from the theory of time scales. Therefore, one can also
consider the construction of soliton systems with spatial variable
belonging to the spaces being partially continuous and discrete.
This might be interesting from the point of view of applications.
There are also other approaches generalizing and unifying theory
of soliton systems, presented in \cite{DM-H} and \cite{BGSSz}.

On the other hand, making use of the regular-discrete time scales
only, the theory from the article can be considered as a
discretization scheme of field soliton systems. In some special
cases, introducing appropriately deformation parameter to some
regular-discrete time scales, one can consider the quasi-classical
limit of discrete soliton systems yielding dispersive field
soliton equations. In particular for $\Time=\hk\Z$ the
quasi-classical limit is given by $\hk\arrow 0$ and for
$\Time=\Km_q$ by $q\arrow 1$, see \cite{BSS}.

\section*{Acknowledgement}

This work was partially  supported by the Scientific and Technical
Research Council of Turkey and MNiSW research grant no. N N202 404933.
B.Sz. was supported by the European Community under a Marie Curie
Intra-European Fellowship, contract no. PIEF-GA-2008-221624.

\section*{Appendix}

We verify the Lemma \ref{lemma} by considering the positive and
negative cases of $n$ separately by the use of induction. Note
that $E-\mu\Delta = E^{-1}-\mu\Delta^\dag = 1$.

Let $n\me 0$. Assume that \eqref{rel1} holds for positive $n$. Then
\begin{align*}
    (E-\mu\Delta)^{n+1} &= (E-\mu\Delta)^n(E-\mu\Delta) = (E-\mu\Delta)^nE - \mu(E-\mu\Delta)^n\Delta\\
    &= \sum_{k=0}^n(-\mu)^kS_k^nE + \sum_{k=0}^n(-\mu)^{k+1}S_k^n\Delta
    = \sum_{k=0}^{n+1}(-\mu)^kS_k^nE + \sum_{k=0}^{n+1}(-\mu)^kS_{k-1}^n\Delta\\
    &= \sum_{k=0}^{n+1}(-\mu)^k\bra{S_k^nE + S_{k-1}^n\Delta} = \sum_{k=0}^{n+1}(-\mu)^kS_k^{n+1},
\end{align*}
where we used the fact that $S_{n+1}^n = S_{-1}^n = 0$ and the recurrence relation \eqref{rec1}.

Let $n<0$. First we show \eqref{rec1} for $n=-1$. Thus, using the
recursive substitution we deduce
\begin{align*}
    (E-\mu\Delta)^{-1} &= \bra{E^{-1}-\mu\Delta^\dag}(E-\mu\Delta)^{-1}
           = E^{-1} - \mu(E-\mu\Delta)^{-1}\Delta^\dag\\
           &= E^{-1} - \mu \bra{E^{-1}-\mu(E-\mu\Delta)^{-1}\Delta^\dag}\Delta^\dag
           = E^{-1} - \mu E^{-1}\Delta^\dag + \mu^2(E-\mu\Delta)^{-1}{\Delta^\dag}^2\\
           &= E^{-1} - \mu E^{-1}\Delta^\dag + \mu^2E^{-1}{\Delta^\dag}^2 - \mu^3E^{-1}{\Delta^\dag}^3 + \ldots\\
           &= \sum_{k=0}^\infty(-\mu)^kE^{-1}{\Delta^\dag}^k = \sum_{k=0}^\infty(-\mu)^kS_k^{-1}.
\end{align*}
Assume  that \eqref{rel1} holds for negative $n$. Then
\begin{align*}
    (E-\mu\Delta)^{n-1} &= (E-\mu\Delta)^n(E-\mu\Delta)^{-1}
    = \sum_{k=0}^\infty(-\mu)^kS_k^n \sum_{i=0}^\infty(-\mu)^iE^{-1}{\Delta^\dag}^i\\
    &= \sum_{k=0}^\infty\sum_{i=0}^\infty(-\mu)^{k+i}S_k^nE^{-1}{\Delta^\dag}^i = \sum_{k=0}^\infty\sum_{i=0}^k(-\mu)^kS_{k-i}^nE^{-1}{\Delta^\dag}^i\\
    &= \sum_{k=0}^\infty(-\mu)^k\sum_{i=0}^kS_{k-i}^nE^{-1}{\Delta^\dag}^i = \sum_{k=0}^\infty(-\mu)^kS_k^{n-1},
\end{align*}
where we used \eqref{rel1} for $n=-1$ and the recurrence relation
\eqref{rec2}. Hence \eqref{rel1} holds for $n-1$, which finishes
the proof.

\footnotesize

\end{document}